\newcommand\numberthis{\addtocounter{equation}{1}\tag{\theequation}}
\theoremstyle{plain}
\newtheorem{theorem}{Theorem}
\newtheorem{lemma}[theorem]{Lemma}
\theoremstyle{definition}
\newcommand\xqed[1]{%
	\leavevmode\unskip\penalty9999 \hbox{}\nobreak\hfill\quad\hbox{#1}%
}
\newcommand\remarkend{\xqed{$\triangle$}}
\def\@endtheorem{\remarkend\endtrivlist\@endpefalse }
\theoremstyle{remark}
\def\@endtheorem{\endtrivlist\@endpefalse }
\crefname{theorem}{Theorem}{Theorems}
\crefname{lemma}{Lemma}{Lemmas}
\crefname{proposition}{Proposition}{Propositions}
\crefname{corollary}{Corollary}{Corollaries}
\crefname{definition}{Definition}{Definitions}
\crefname{assumption}{Assumption}{Assumptions}
\crefname{remark}{Remark}{Remarks}
\crefname{subsection}{subsection}{subsections}
\crefname{subsubsection}{subsection}{subsections}
\renewcommand{\d}[1]{\ensuremath{\operatorname{d}\!{#1}}}
\newcommand{\diver}{\operatorname{div}}
\newcommand\mydots{\ifmmode\mathellipsis\else.\kern-0.08em.\kern-0.08em.\fi}
\title[Ground state energy of a three-body hard-core Bose gas]{Ground state energy of a dilute Bose gas with three-body hard-core interactions}
\author[L. Junge]{Lukas Junge}
\address{Department of Mathematics, Copenhagen university, Lyngbyvej 2, 2100 Copenhagen, Denmark}
\email{lj@math.ku.dk}
\author[F. L. A. Visconti]{François L. A. Visconti}%\footnote{Mathematisches Institut der Ludwig-Maximilians-Universität München, visconti@math.lmu.de}}
\address{Department of Mathematics, LMU Munich, Theresienstrasse 39, 80333 Munich, Germany}
\email{visconti@math.lmu.de}
\begin{document}
	\maketitle
	
	\section*{Abstract}
	We consider a gas of bosons interacting through a three-body hard-core potential in the thermodynamic limit. We derive an upper bound on the ground state energy of the system at the leading order using a Jastrow factor. Our result matches the lower bound proven by Nam--Ricaud--Triay \cite{Nam2022ground} and therefore resolves the leading order. Moreover, a straightforward adaptation of our proof can be used for systems interacting via combined two-body and three-body interactions to generalise \cite[Theorem 1.2]{Visconti2024gse} to hard-core potentials.
		
	\tableofcontents
	
	\section{Introduction}
	A system of $N$ bosons trapped in a box $\Lambda_L \coloneqq \left[0,L\right]^3$ interacting via three-body interactions can be described by the Hamiltonian operator
\begin{equation}
	\label{eq:general_three_body_hamiltonian}
	H_{N,L} = \sum_{i=1}^N-\Delta_{x_i} + \sum_{1\leq i<j<k\leq N}w(x_i-x_j,x_i-x_k)
\end{equation}
acting on the Hilbert space $L_\textmd{s}^2(\Lambda_L^N)$ - the subspace of $L^2(\Lambda_L^N)$ consisting of functions that are symmetric with respect to permutations of the $N$ particles. Such systems have received a lot of attention in recent years and have been the subject of many mathematical works \cite{adami2023microscopicDS,Chen2011quinticNLS,Chen2012secondOC,Chen2018TheDO,lee2021rateCT,Li2021derivationNS,Nam2019derivation3D,Nguyen2023onedimensional,Nguyen2023stabilization,Rout2024microscopicDG,xie2013DerivationNLS,Yuan2015derivationQNLS}.

In \cite{Nam2022ground}, Nam--Ricaud--Triay proved that for a nonnegative, compactly supported potential $w\in L^\infty(\mathbb{R}^6)$, the  Hamiltonian \eqref{eq:general_three_body_hamiltonian} satisfies
\begin{equation}
	\label{eq:general_three_body_hamiltonian_gse_upper_bound}
	\quad \lim_{\substack{N,L\rightarrow\infty\\ N/L^3\rightarrow\rho}}\dfrac{\inf\sigma(H_{N,L})}{N} = \dfrac{1}{6}\rho^2b_{\mathcal{M}}(w)(1 + O(Y^\nu))
\end{equation}
when $Y \coloneqq \rho b_\mathcal{M}(w)^{3/4} \rightarrow 0$, for some constant $\nu > 0$. Here, $b_\mathcal{M}(w)$ is the scattering energy associated to $w$ (see \cite{Nam2023condensation}). This was then improved in \cite{Visconti2024gse}, where it was shown that \eqref{eq:general_three_body_hamiltonian_gse_upper_bound} holds for $w\geq 0$ compactly supported and satisfying
\begin{equation*}
	\|w\|_{L^2L^1} \coloneqq \left(\int_{\mathbb{R}^3}\|w(x,\cdot)\|_{L^1(\mathbb{R}^3)}^2\d{}x\right)^{1/2} < \infty.
\end{equation*}
It was also shown in \cite{Visconti2024gse} that \eqref{eq:general_three_body_hamiltonian_gse_upper_bound} holds with an error in $o(1)$ for $w$ of class $L^1$. The goal of this paper is to prove that \eqref{eq:general_three_body_hamiltonian_gse_upper_bound} remains valid for particles interacting with a hard-core potential.

%Bose-Einstein condensation (BEC) is the phenomenon where 

%Dilute Bose gases are usually described by two-body 

%In the recent years, dilute Bose gases with three-body interactions have received increasingly more attention. Though for most applications two-body interactions are sufficient to correctly describe dilute Bose gases, they 

% and many mathematical works 

%In \cite{Nam2022ground}, Nam--Ricaud--Triay proved that the ground state energy per particle of a dilute Bose gas  interacting via a three-body interaction potential $w$ in the thermodynamic limit satisfies
%\begin{equation}
%	\label{eq:ground_state_energy_thermodynamic_limit}
%	e(\rho) = \dfrac{1}{6}\rho^2b_\mathcal{M}(w)(1 + o(1)_{\rho b_\mathcal{M}(w)^{3/4}\rightarrow0}),
%\end{equation}
%where $\rho$ denotes the density of the system and $b_\mathcal{M}(w)$ is the scattering energy associated to $w$; the limit $\rho b_\mathcal{M}(w)^{3/4} \rightarrow 0$ corresponds to the dilute regime. Though their lower bound can easily be extended to cover hard-core potentials, the upper bound (generalised to potentials of class $L^1$ in \cite{Visconti2024gse}) requires more regularity on the interaction potential and in particular does not cover the hard-core case. The goal of this paper is to prove that \eqref{eq:ground_state_energy_thermodynamic_limit} remains valid, as an upper bound, for particles interacting with a hard-core potential.

We consider a gas of $N$ bosons with three-body hard-core interactions in $\Lambda_L = [0,L]^3$. We are looking for an upper bound on the ground state energy
\begin{equation}
	\label{eq:ground_state_energy_definition}
	E_{N,L} = \inf\dfrac{\left<\Psi,\sum_{i=1}^N-\Delta_{x_i}\Psi\right>}{\|\Psi\|^2},
\end{equation}
with the infimum taken over all $\Psi\in L_\textmd{s}^2(\Lambda_L^N)$ satisfying the three-body hard-core condition $\Psi(x_1,\dots,x_N) = 0$ if there exist $i,j,k\in\{1,\dots,N\}$, $i\neq j\neq k\neq i$ with $|(x_i-x_j,x_i-x_k,x_j-x_k)|/\sqrt{3} \leq \mathfrak{a}$. \footnote{Though there is no canonical choice for the three-body hard-core potential, the present choice is motivated by the Physics literature (see e.g. \cite{Comparin2015Liquid,Piatecki2014Efimov}).} Here, $|\cdot|$ is the euclidean norm in $\mathbb{R}^9$ and $\sum_{i=1}^N-\Delta_{x_i}$ is to be understood in the quadratic form sense in $H^1_\textmd{s}(\Lambda_L^N)$. Note that the scattering energy associated to the hard-core potential
\begin{equation*}
	w_\textmd{hc}(x-y,x-z) =
	\left\{
	\begin{array}{ll}
		+\infty & \textmd{if $|(x-y,x-z,y-z)|/\sqrt{3} \leq \mathfrak{a}$,}\\
		0 & \textmd{otherwise}
	\end{array}
	\right.
\end{equation*}
is given by
\begin{equation*}
	b_\mathcal{M}(w_\textmd{hc}) = \dfrac{64}{3\sqrt{3}}\pi^2\mathfrak{a}^4.
\end{equation*}

%{\color{red}(mention that this corresponds to the physical literature, but is a special case of 3B sym)}. \cite{Comparin2015Liquid,Piatecki2014Efimov}

\begin{theorem}
	\label{th:energy_upper_bound}
	There exists $C > 0$ (independent of $\mathfrak{a}$ and $\rho$) such that
	\begin{equation}
		\label{eq:energy_upper_bound}
		\lim_{\substack{N,L\rightarrow\infty\\ N/L^3\rightarrow\rho}}\dfrac{E_{N,L}}{N} = \dfrac{32}{9\sqrt{3}}\pi^2\rho^2\mathfrak{a}^4\left(1 + C\left(\rho\mathfrak{a}^3\right)^{\nu}\right)
	\end{equation}
	for all $\rho\mathfrak{a}^3$ small enough and for some $\nu > 0$.
\end{theorem}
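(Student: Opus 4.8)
The plan is to prove the upper bound $\limsup_{N,L} E_{N,L}/N \le \frac{32}{9\sqrt3}\pi^2\rho^2\mathfrak a^4\bigl(1+C(\rho\mathfrak a^3)^\nu\bigr)$ by a variational argument with a Jastrow trial state; the matching lower bound is supplied (by approximation of the hard core by steep finite potentials, using monotonicity of the energy) by \cite{Nam2022ground}, and the two estimates combine into \eqref{eq:energy_upper_bound}. The first task is to pin down the geometry of the constraint. Passing to center-of-mass and relative coordinates, the three-body condition $|(x_i-x_j,x_i-x_k,x_j-x_k)|/\sqrt3\le\mathfrak a$ becomes a ball of hyperradius $\mathfrak a$ in the six-dimensional relative space, and $\sum_i-\Delta_{x_i}$ restricts there to an $\mathcal M$-weighted Laplacian (this is the mass matrix behind the notation $b_\mathcal M$). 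After diagonalizing $\mathcal M$, the zero-energy scattering problem — harmonic outside the ball, vanishing on its boundary, tending to $1$ at infinity — is solved explicitly by the radial harmonic function $\phi(R)=1-(\mathfrak a/R)^4$ of $\mathbb R^6$. This both recovers $b_\mathcal M(w_{\mathrm{hc}})=\frac{64}{3\sqrt3}\pi^2\mathfrak a^4$ and, crucially, furnishes an explicit scattering function whose gradient I can control.

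\textbf{Trial state and leading order.} I would fix a length $b$ with $\mathfrak a\ll b\ll\rho^{-1/3}$, let $f$ equal the (normalized) scattering solution for hyperradius in $[\mathfrak a,b]$ and $1$ beyond, and set $\Psi_N=\prod_{1\le i<j<k\le N} f(x_i-x_j,x_i-x_k)$. By construction $\Psi_N$ vanishes on the hard core and is admissible in \eqref{eq:ground_state_energy_definition}. Working with $\nabla_{x_i}\Psi_N=\sum_{\alpha\ni i}(\nabla_{x_i}f_\alpha)\prod_{\beta\ne\alpha}f_\beta$ rather than $\nabla_{x_i}\log\Psi_N$ (to avoid the boundary singularity of $\log f$), the \emph{diagonal} contributions — one triple active while the remaining $N-3$ particles are far — sum to $\binom N3$ copies of the single-triple scattering energy $b_\mathcal M(w_{\mathrm{hc}})$. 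A density count, using $\binom N3\approx N^3/6$, that each active triple occupies relative volume $O(b^6)$ inside $L^6$, and $N/L^3\to\rho$, produces exactly the leading term $\frac16\rho^2 b_\mathcal M(w_{\mathrm{hc}})=\frac{32}{9\sqrt3}\pi^2\rho^2\mathfrak a^4$.

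\textbf{Error control.} It then remains to show everything else is $O\bigl((\rho\mathfrak a^3)^\nu\bigr)$ relative to the leading order. Three sources must be bounded. First, the normalization $\|\Psi_N\|^2=\int\prod_\alpha f_\alpha^2$, which I would bound below by $L^{3N}(1-o(1))$ using that $1-f^2$ is supported near the hard core and carries small integrated mass. Second, the truncation at $b$, which replaces the true scattering tail; the finite-$b$ normalization of $\phi$ contributes a relative error of size $(\mathfrak a/b)^4$. Third, the off-diagonal cross terms $\nabla_{x_i}f_\alpha\cdot\nabla_{x_i}f_\beta$ arising from two distinct triples $\alpha\ne\beta$ sharing the particle $i$ (and possibly a second particle). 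Balancing the truncation error $(\mathfrak a/b)^4$ against the density error, which scales like a positive power of $\rho b^3$, and optimizing the free length $b\in(\mathfrak a,\rho^{-1/3})$, produces the exponent $\nu>0$. A final localization step into boxes, with periodic or Neumann data and subleading boundary layers, upgrades the box estimate to the stated thermodynamic limit.

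\textbf{Main obstacle.} I expect step three, the off-diagonal terms, to be the crux. Unlike the two-body case, one must count configurations in which several triples are simultaneously within interaction range and overlap in one or two particles, and show that each such overlap costs a power of the diluteness parameter $\rho b^3$; the combinatorics of overlapping triples among five or six particles, together with integrability of $\nabla f$ against the background density, is what makes the bookkeeping delicate. Here the explicit form $\phi(R)=1-(\mathfrak a/R)^4$ and the consequent decay and near-hard-core localization of $\nabla f$ are what let me close the estimates quantitatively rather than merely qualitatively.
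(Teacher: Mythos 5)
Your proposal follows essentially the same route as the paper: a Jastrow trial state $\prod_{i<j<k}f_\ell(x_i-x_j,x_i-x_k)$ built from the truncated explicit scattering solution $f=1-(\sqrt{2}\mathfrak a)^4/|\mathcal M^{-1}\mathbf x|^4$, a split of the kinetic energy into the diagonal term and the two overlap patterns (distinct triples sharing two particles or one particle), control of the normalization by the smallness of $1-f_\ell^2$, and optimization of the cutoff length (the paper takes $\ell=\mathfrak a(\rho\mathfrak a^3)^{-1/7}$, yielding $\nu=4/7$). The one step you flag as the crux --- the bookkeeping for overlapping triples and the lower bound on the correlated denominators --- is closed in the paper not by delicate combinatorics but by the elementary pointwise bound $f_\ell(x_1,x_2)\ge\max\bigl(g_\ell(x_1),g_\ell(x_2)\bigr)$ with $g_\ell=\mathds{1}_{\{|x|\ge\widetilde{\ell}\}}$, which sandwiches $\prod_{3\le j<k\le N}f_{1jk}^2f_{2jk}^2\prod_k f_{12k}^2$ between $1-\sum_j v_{1j}-\sum_j v_{2j}-\sum_k u_{12k}$ and $1$ and thereby decouples every integral into explicit one- and two-body pieces of size $\rho\ell^3$.
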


The matching lower bound was proven in \cite{Nam2022ground}. Here are some remarks on the result:
\begin{enumerate}
	\item The strategy of the proof of Theorem~\ref{th:energy_upper_bound} can be used to generalise \eqref{eq:general_three_body_hamiltonian_gse_upper_bound} to $w$ of class $L^1$ with an error uniform in $w$ assuming that $R_0/b_\mathcal{M}(w)^{1/4}$ remains bounded, where $R_0$ is the range of $w$.
	\item In \cite[Conjecture 8]{Nam2022dilute}, a heuristic approach was used to predict that the ground state energy of a system described by the Hamiltonian \eqref{eq:general_three_body_hamiltonian} should satisfy
	\begin{equation}
		\label{eq:energy_second_order_conjecture}
		\quad \lim_{\substack{N,L\rightarrow\infty\\ N/L^3\rightarrow\rho}}\dfrac{\inf\sigma(H_{N,L})}{N} = \dfrac{1}{6}\rho^2b_{\mathcal{M}}(w)(1 + C(w)\rho + o(\rho))
	\end{equation}
	in the low density regime $\rho \rightarrow 0$, for some constant $C(w)$ depending only on $w$. The error yielded by the proof of Theorem~\ref{th:energy_upper_bound} is of order $(\rho\mathfrak{a}^3)^{4/7} \gg \rho\mathfrak{a}^3$, meaning that it does not capture the error at the correct order. The same problem arises in the two-body case when only using cancellations between the numerator and the denominator similar to the ones used in \eqref{eq:energy_bound_I1}--\eqref{eq:energy_bound_I3} (see for example \cite{Basti2022gseGP}). To extract an error at the correct order in the two-body case one needs to push the analysis much further and identify additional cancellations, as was done in \cite{Basti2023UpperBound}.
	%This was however to be expected since Dyson's approach does not yield the correct error in the two-body case: it yields an error of order $\rho^{1/3}$ instead of $\rho^{1/2}$. Though Dyson considered a nonsymmetric trial state in \cite{Dyson1957gseHS}, this also holds true for a Jastrow factor trial state (see for example \cite{Basti2022gseGP}). To extract an error at the correct order for such a trial state one needs to push the analysis much further, as was done in \cite{Basti2023UpperBound}. It would be interesting to know if the analysis from \cite{Basti2023UpperBound} can be adapted in order to prove \eqref{eq:energy_upper_bound} with an error of order $\rho$.
	\item A straightforward adaptation of the  proof of Theorem~\ref{th:energy_upper_bound} can also be used to derive a correct upper bound at the first order of the ground state energy of a system interacting via two-body and three-body interactions. More specifically, the ground state energy $E_{N,L}'$ of a system of $N$ bosons interacting via a two-body hard-core potential of radius $\mathfrak{a}_\textmd{2}$ and a three-body hard-core potential of radius $\mathfrak{a}_\textmd{3} > \mathfrak{a}_\textmd{2}$ in $\Lambda_L$ is such that
	\begin{equation*}
		\qquad\quad \lim_{\substack{N,L\rightarrow\infty\\ N/L^3\rightarrow\rho}}\dfrac{E_{N,L}'}{N} \leq \left(4\pi\rho\mathfrak{a}_\textmd{2} + \dfrac{32}{9\sqrt{3}}\pi^2\rho^2\mathfrak{a}_\textmd{3}^4\right) \left(1 + C\left(\rho\mathfrak{a}_\textmd{2}^3\right)^{1/3} + C\left(\rho\mathfrak{a}_\textmd{3}^3\right)^{4/7}\right)
	\end{equation*}
	for $\rho\mathfrak{a}_2^3$ and $\rho\mathfrak{a}_3^3$ small enough.	To prove this one considers a trial state of the form
	\begin{equation*}
		\qquad \Psi(x_1,\dots,x_N) = \prod_{1\leq i<j\leq N}\widetilde{f}_{\ell_1}(x_i - x_j)\prod_{1\leq i<j<k\leq N}f_{\ell_2}(x_i - x_j,x_i - x_k),
	\end{equation*}
	where $\widetilde{f}_{\ell_1}$ describes the two-body correlations up to a distance $\ell_1$ and $f_{\ell_2}$ describes the three-body correlations up to a distance $\ell_2$. This generalises \cite[Theorem 1.2]{Visconti2024gse} to hard-core potentials.%, whilst also giving a much simpler proof.
\end{enumerate}

	\section{Scattering properties of the three-body hard-core potential}
	Since we are considering a dilute Gas the correlation structure is encoded in the zero-scattering problem
\begin{equation}
	\label{eq:zero_scattering_equation}
	(-\Delta_x -\Delta_y -\Delta_z)f(x-y,x-z) = 0
\end{equation}
on $\left(\mathbb{R}^3\right)^3$, where $f$ satisfies the conditions $f(x-y,x-z) = 0$ if $|(x-y,x-z,y-z)|/\sqrt{3} \leq \mathfrak{a}$ and $f(\mathbf{x}) \rightarrow 1$ as $|\mathbf{x}| \rightarrow \infty$. Note that $f$ satisfies the three-body symmetry properties
\begin{equation}
	\label{eq:three_body_symmetry}
	f(x,y) = f(y,x) \quad \textmd{and} \quad f(x - y,x - z) = f(y - x,y - z) = f(z - x, z - y),
\end{equation}
for all $x,y,z\in\mathbb{R}^3$.

By removing the centre of mass using the change of variables
\begin{equation}
	\label{eq:removal_centre_of_mass_change_of_variables}
	r_1 = \dfrac{1}{3}(x + y + z), \quad r_2 = x-y, \quad \textmd{and} \quad r_3 = x-z,
\end{equation}
we find that the scattering problem \eqref{eq:zero_scattering_equation} is equivalent to the modified zero-scattering problem
\begin{equation}
	\label{eq:modified_zero_scattering_equation}
	-2\Delta_\mathcal{M}f(r_2,r_3) = 0
\end{equation}
on $\left(\mathbb{R}^3\right)^2$, with $f$ satisfying the conditions $f(r_2,r_3) = 0$ if $\left|\mathcal{M}^{-1}(r_2,r_3)\right| \leq \sqrt{2}\mathfrak{a}$ and $f(\mathbf{x}) \rightarrow 1$ as $|\mathbf{x}| \rightarrow \infty$. Here we introduced the modified Laplacian
\begin{equation*}
	-\Delta_\mathcal{M} = -|\mathcal{M}\nabla_{\mathbb{R}^6}|^2 = -\diver_{\mathbb{R}^6}(\mathcal{M}^2\nabla_{\mathbb{R}^6}),
\end{equation*}
where the matrix $\mathcal{M}:\mathbb{R}^3\times\mathbb{R}^3 \rightarrow \mathbb{R}^3\times\mathbb{R}^3$ is given by
\begin{equation*}
	\label{eq:modified_matrix}
	\mathcal{M} \coloneqq \left(
	\dfrac{1}{2}
	\begin{pmatrix}
		2 & 1\\
		1 & 2
	\end{pmatrix}
	\right)^{1/2}
	= \dfrac{1}{2\sqrt{2}}
	\begin{pmatrix}
		\sqrt{3} + 1 & \sqrt{3} - 1\\
		\sqrt{3} - 1 & \sqrt{3} + 1
	\end{pmatrix},
\end{equation*}
with inverse
\begin{equation*}
	\mathcal{M}^{-1} = \left(
	\dfrac{2}{3}
	\begin{pmatrix}
		2 & -1\\
		-1 & 2
	\end{pmatrix}
	\right)^{1/2}
	= \dfrac{1}{\sqrt{6}}
	\begin{pmatrix}
		1 + \sqrt{3} & 1 - \sqrt{3}\\
		1 - \sqrt{3} & 1 + \sqrt{3}
	\end{pmatrix}
\end{equation*}
(see \cite{Nam2023condensation} for a more in depth discussion on the matter). Note that $\det\mathcal{M} = \sqrt{3}/2$.

Let $f$ denote a solution to \eqref{eq:modified_zero_scattering_equation} and define $\widetilde{f} \coloneqq f(\mathcal{M}\cdot)$. Then, $\widetilde{f}$ solves
\begin{equation*}
	\label{eq:zero_scattering_equation_radial}
	-\Delta \widetilde{f} = 0
\end{equation*}
on $\mathbb{R}^6$, with the conditions $\widetilde{f}(\mathbf{x}) = 0$ for $|\mathbf{x}| \leq \sqrt{2}\mathfrak{a}$ and $\widetilde{f}(\mathbf{x}) \rightarrow 1$ as $|\mathbf{x}| \rightarrow \infty$. By rewriting the previous problem in hyperspherical coordinates we find that \eqref{eq:modified_zero_scattering_equation} has for unique solution
\begin{equation*}
	\label{eq:modified_scattering_solution}
	f(\mathbf{x}) = 
	\left\{
	\begin{array}{ll}
		1 - \dfrac{4\mathfrak{a}^4}{|\mathcal{M}^{-1}\mathbf{x}|^4} \quad &\textmd{if $|\mathcal{M}^{-1}\mathbf{x}| > \sqrt{2}\mathfrak{a}$},\\
		0 &\textmd{otherwise}.
	\end{array}
	\right.
\end{equation*}
Let us also define
\begin{equation*}
	\omega(\mathbf{x}) \coloneqq 1 - f(\mathbf{x}),
\end{equation*}
for all $\mathbf{x}\in\mathbb{R}^6$.

\sloppy We shall need a truncated version of $f$ with a cut-off. Let $\widetilde{\chi}\in C^\infty(\mathbb{R}^6;[0,1])$ be a radial function satisfying $\widetilde{\chi}(\mathbf{x}) = 1$ if $|\mathbf{x}|\leq 1/2$ and $\widetilde{\chi}(\mathbf{x}) = 0$ if $|\mathbf{x}| \geq 1$, and define $\chi \coloneqq \widetilde{\chi}(\mathcal{M}^{-1}\cdot)$. For all $\ell\in(\mathfrak{a},L)$, we define
\begin{equation}
	\label{eq:truncated_modified_scattering_solution}
	\chi_\ell \coloneqq \chi(\ell^{-1}\cdot), \quad \omega_\ell \coloneqq \chi_\ell\omega, \quad \textmd{and} \quad f_\ell \coloneqq 1 - \omega_\ell.
\end{equation}
Note that $\omega, f, \omega_\ell$ and $f_\ell$ satisfy the three-body symmetry \eqref{eq:three_body_symmetry}. Moreover, they have the following properties:
\begin{lemma}
	\label{lemma:scattering_properties}
	Let $\ell \in(\mathfrak{a},L)$. Then, we have
	\begin{equation}
		\label{eq:truncated_modified_scattering_solution_estimate_derivative}
		|\nabla f_\ell(\mathbf{x})| \leq C\mathfrak{a}^4\ell^{-1}\dfrac{\mathds{1}_{\left\{C_1\ell\leq |\mathbf{x}|\leq C_2\ell\right\}}}{|\mathbf{x}|^4}
	\end{equation}
	and
	\begin{equation}
		\label{eq:truncated_modified_scattering_solution_estimate_u}
		0 \leq 1 - f_\ell^2(\mathbf{x}) \leq C\mathfrak{a}^4\dfrac{\mathds{1}_{\left\{C_1\mathfrak{a}\leq |\mathbf{x}|\leq C_2\ell\right\}}}{|\mathbf{x}|^4},
	\end{equation}
	for all $\mathbf{x}\in\mathbb{R}^6$. Here, $C,C_1,C_2$ are universal positive constants such that $C_1 < C_2$. Moreover,
	\begin{equation}
		\label{eq:truncated_modified_scattering_solution_energy}
		\int_{\mathbb{R}^6}\d{}\mathbf{x}\left(|(\mathcal{M}\nabla_{\mathbb{R}^6}f_\ell)(\mathbf{x})|^2\right) \leq \dfrac{32}{3\sqrt{3}}\mathfrak{a}^4\left(1 + C\left(\dfrac{\mathfrak{a}}{\ell}\right)^4\right)
	\end{equation}
	Furthermore, by defining $\widetilde{\ell} \coloneqq \sqrt{3/2}\ell$ and
	\begin{equation}
		\label{eq:truncated_modified_scattering_solution_estimate_g}
		g_\ell(x) \coloneqq \mathds{1}_{\left\{|x| \geq \widetilde{\ell}\right\}},
	\end{equation}
	we have
	\begin{equation}
		\label{eq:truncated_modified_scattering_solution_estimate_g_inequality}
		f_\ell(x_1,x_2) \geq \max(g_\ell(x_1),g_\ell(x_2)),
	\end{equation}
	for all $x_1,x_2\in\mathbb{R}^3$.
\end{lemma}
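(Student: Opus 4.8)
The plan is to reduce every claim to an explicit computation with the radial solution $\widetilde{f}(\mathbf{y}) = f(\mathcal{M}\mathbf{y}) = 1 - 4\mathfrak{a}^4|\mathbf{y}|^{-4}$ (extended by $0$ on the hard core $\{|\mathbf{y}|\leq\sqrt{2}\mathfrak{a}\}$), exploiting that $f$ becomes radial after the change of variables $\mathbf{y} = \mathcal{M}^{-1}\mathbf{x}$ as in \eqref{eq:modified_scattering_solution}. First I would record the exact radial derivative $|\nabla\widetilde{f}(\mathbf{y})| = 16\mathfrak{a}^4|\mathbf{y}|^{-5}$ on $\{|\mathbf{y}|>\sqrt{2}\mathfrak{a}\}$ and translate it back to the $\mathbf{x}$-variable using the norm equivalence $c|\mathbf{x}|\leq|\mathcal{M}^{-1}\mathbf{x}|\leq C|\mathbf{x}|$, where $c,C$ are the extreme singular values of $\mathcal{M}^{-1}$; this converts the $|\mathcal{M}^{-1}\mathbf{x}|^{-4}$ decay of $\omega$ into a $|\mathbf{x}|^{-4}$ decay. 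The constants $C_1,C_2$ in the annuli then arise from $c,C$ together with the support $\{1/2\leq|\cdot|\leq 1\}$ of $\nabla\widetilde{\chi}$.

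For the derivative bound \eqref{eq:truncated_modified_scattering_solution_estimate_derivative} I would write $\nabla f_\ell = -(\nabla\chi_\ell)\omega - \chi_\ell\nabla\omega$ and estimate the two terms separately, using $|\nabla\chi_\ell|\leq C\ell^{-1}$ on the transition region and $|\nabla\omega|\leq C\mathfrak{a}^4|\mathbf{x}|^{-5}$ elsewhere. For \eqref{eq:truncated_modified_scattering_solution_estimate_u}, the factorisation $1 - f_\ell^2 = \omega_\ell(1+f_\ell)$ together with $0\leq f_\ell\leq 1$ gives $0\leq 1-f_\ell^2\leq 2\omega_\ell\leq 2\omega\leq C\mathfrak{a}^4|\mathbf{x}|^{-4}$, while the cut-off $\chi_\ell$ forces the upper support bound $|\mathbf{x}|\leq C_2\ell$ and the hard-core radius forces the lower bound $|\mathbf{x}|\geq C_1\mathfrak{a}$.

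The energy estimate \eqref{eq:truncated_modified_scattering_solution_energy} I would obtain by the same change of variables, $\int_{\mathbb{R}^6}|\mathcal{M}\nabla f_\ell|^2\,\d{\mathbf{x}} = \det\mathcal{M}\int_{\mathbb{R}^6}|\nabla\widetilde{f}_\ell|^2\,\d{\mathbf{y}}$, and then splitting the $\mathbf{y}$-integral into the core region $\{|\mathbf{y}|\leq\ell/2\}$, where $\widetilde{f}_\ell = \widetilde{f}$ and the integral is controlled by the full Dirichlet energy $\int_{\mathbb{R}^6}|\nabla\widetilde{f}|^2$ of the scattering solution (an elementary radial integral producing the leading constant $\tfrac{32}{3\sqrt{3}}\mathfrak{a}^4$), and the transition region $\{\ell/2\leq|\mathbf{y}|\leq\ell\}$, where $|\nabla\widetilde{f}_\ell|\leq C\mathfrak{a}^4|\mathbf{y}|^{-5}$ and the volume $\sim\ell^6$ gives a contribution $O(\mathfrak{a}^8\ell^{-4}) = O(\mathfrak{a}^4(\mathfrak{a}/\ell)^4)$, which is exactly the claimed relative error; dropping the positive tail $\int_{|\mathbf{y}|>\ell/2}|\nabla\widetilde{f}|^2$ keeps this an upper bound. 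I expect this to be the main bookkeeping obstacle: isolating the sharp leading constant from the truncation error at the correct order $(\mathfrak{a}/\ell)^4$ requires carefully handling the cross term in $|\nabla(\chi_\ell\omega)|^2$ on the transition annulus.

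Finally, for \eqref{eq:truncated_modified_scattering_solution_estimate_g_inequality} I would compute, using the explicit form of $\mathcal{M}^{-1}$, the identity $|\mathcal{M}^{-1}(x_1,x_2)|^2 = \tfrac{4}{3}(|x_1|^2 + |x_2|^2 - x_1\cdot x_2)$; the elementary bound $x_1\cdot x_2\leq\tfrac{1}{2}(|x_1|^2 + |x_2|^2)$ then yields $|\mathcal{M}^{-1}(x_1,x_2)|^2\geq\tfrac{2}{3}\max(|x_1|^2,|x_2|^2)$. Hence $\max(|x_1|,|x_2|)\geq\widetilde{\ell} = \sqrt{3/2}\,\ell$ forces $|\mathcal{M}^{-1}(x_1,x_2)|\geq\ell$, so $\chi_\ell$ vanishes and $f_\ell(x_1,x_2) = 1 = \max(g_\ell(x_1),g_\ell(x_2))$; when instead $\max(|x_1|,|x_2|) < \widetilde{\ell}$, both $g_\ell(x_i)$ vanish and the trivial bound $f_\ell\geq 0$ suffices. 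The remaining three claims are thus essentially direct computations once the radial reduction and the explicit $\mathcal{M}^{-1}$ are in hand.
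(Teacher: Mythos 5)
Your proposal follows essentially the same route as the paper's (very terse) proof: the pointwise bounds from the product rule for $\omega_\ell=\chi_\ell\omega$ together with the explicit form of $\omega$ and the singular values of $\mathcal{M}^{-1}$; the Dirichlet energy via the change of variables $\mathbf{y}=\mathcal{M}^{-1}\mathbf{x}$, with the sharp constant from the full-space radial integral and the truncation contributing $O(\mathfrak{a}^4(\mathfrak{a}/\ell)^4)$; and \eqref{eq:truncated_modified_scattering_solution_estimate_g_inequality} from $f_\ell\equiv 1$ outside the support of $\chi_\ell$, for which your identity $|\mathcal{M}^{-1}(x_1,x_2)|^2=\tfrac{4}{3}(|x_1|^2+|x_2|^2-x_1\cdot x_2)$ supplies exactly the detail the paper omits. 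Your treatments of \eqref{eq:truncated_modified_scattering_solution_estimate_u}, \eqref{eq:truncated_modified_scattering_solution_energy} and \eqref{eq:truncated_modified_scattering_solution_estimate_g_inequality} are correct and complete.

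One step would fail, and your own decomposition exposes it: in $\nabla f_\ell=-(\nabla\chi_\ell)\omega-\chi_\ell\nabla\omega$ only the first term is supported on the annulus $\{C_1\ell\leq|\mathbf{x}|\leq C_2\ell\}$. The second term is nonzero on essentially all of $\{C_1\mathfrak{a}\leq|\mathbf{x}|\leq C_2\ell\}$ (since $\chi_\ell=1$ well inside its support and $\nabla\omega\neq 0$ outside the hard core), where it has size $\sim\mathfrak{a}^4|\mathbf{x}|^{-5}$; for $|\mathbf{x}|\ll\ell$ this is \emph{not} dominated by $C\mathfrak{a}^4\ell^{-1}|\mathbf{x}|^{-4}$ — indeed the right-hand side of \eqref{eq:truncated_modified_scattering_solution_estimate_derivative} vanishes there. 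So "estimate the two terms separately" cannot deliver the stated annulus localisation; the provable version is
\begin{equation*}
	|\nabla f_\ell(\mathbf{x})| \leq C\mathfrak{a}^4\dfrac{\mathds{1}_{\left\{C_1\mathfrak{a}\leq|\mathbf{x}|\leq C_2\ell\right\}}}{|\mathbf{x}|^5} + C\mathfrak{a}^4\ell^{-1}\dfrac{\mathds{1}_{\left\{C_1\ell\leq|\mathbf{x}|\leq C_2\ell\right\}}}{|\mathbf{x}|^4},
\end{equation*}
which still gives $\int_{\mathbb{R}^6}|\nabla f_\ell|\leq C\mathfrak{a}^4\ell$ and hence suffices for every downstream use of the estimate (the bounds on $\mathcal{I}_2$, $\mathcal{I}_3$ and the cross term in the energy, where $\nabla\chi_\ell$ itself provides the localisation). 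You should either prove this corrected bound or note explicitly that \eqref{eq:truncated_modified_scattering_solution_estimate_derivative} as written only accounts for the $(\nabla\chi_\ell)\omega$ piece; as it stands your sketch silently discards the bulk contribution of $\chi_\ell\nabla\omega$.
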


\begin{proof}
	Both \eqref{eq:truncated_modified_scattering_solution_estimate_derivative} and \eqref{eq:truncated_modified_scattering_solution_estimate_u} follow directly from the definition of $f_\ell$ and $|\nabla\chi(\mathbf{x})| \leq C\ell^{-1}\mathds{1}_{\left\{\ell/2\leq |\mathcal{M}^{-1}\mathbf{x}|\leq \ell\right\}}$ and $\sigma(\mathcal{M}^{-1}) = \left\{\sqrt{2/3},\sqrt{2}\right\}$. To compute \eqref{eq:truncated_modified_scattering_solution_energy} we first write
	\begin{align*}
		\int_{\mathbb{R}^6}\d{}\mathbf{x}\Big(|(\mathcal{M}\nabla_{\mathbb{R}^6}f_\ell)(\mathbf{x})|^2\Big) &= \int_{\mathbb{R}^6}\d{}\mathbf{x}\Big(|(\mathcal{M}\nabla_{\mathbb{R}^6}\omega)(\mathbf{x})|^2\chi_\ell(\mathbf{x})^2\Big)\\
		&\phantom{=} + 2\int_{\mathbb{R}^6}\d{}\mathbf{x}\Big((\mathcal{M}\nabla_{\mathbb{R}^6}\omega)(\mathbf{x})\cdot(\mathcal{M}\nabla_{\mathbb{R}^6} \chi_\ell)(\mathbf{x})\omega_\ell(\mathbf{x})\Big)\\
		&\phantom{=} + \int_{\mathbb{R}^6}\d{}\mathbf{x}\Big(\omega(\mathbf{x})^2|(\mathcal{M}\nabla_{\mathbb{R}^6}\chi_\ell)(\mathbf{x})|^2\Big).
	\end{align*}
	The only contribution of order $\mathfrak{a}^4$ comes from the first term. Indeed, using again $|\nabla\chi(\mathbf{x})| \leq C\ell^{-1}\mathds{1}_{\left\{\ell/2\leq |\mathcal{M}^{-1}\mathbf{x}|\leq \ell\right\}}$ and \eqref{eq:truncated_modified_scattering_solution_estimate_derivative} we have
	\begin{equation*}
		\int_{\mathbb{R}^6}\d{}\mathbf{x}\Big(2(\mathcal{M}\nabla_{\mathbb{R}^6}\omega)(\mathbf{x})\cdot(\mathcal{M}\nabla_{\mathbb{R}^6} \chi_\ell)(\mathbf{x})\omega_\ell(\mathbf{x}) + |\omega(\mathbf{x})^2|(\mathcal{M}\nabla_{\mathbb{R}^6}\chi_\ell)(\mathbf{x})|^2\Big)\\
		\leq C\mathfrak{a}^4\left(\dfrac{\mathfrak{a}}{\ell}\right)^4.
	\end{equation*}
	Moreover, by writing $\omega = \widetilde{\omega}(\mathcal{M}^{-1}\cdot)$ with $\widetilde{\omega}(\mathbf{x}) = 4\mathfrak{a}^4/|\mathbf{x}|^4$, we get
	\begin{align*}
		\int_{\mathbb{R}^6}\d{}\mathbf{x}\Big(|(\mathcal{M}\nabla_{\mathbb{R}^6}\omega)(\mathbf{x})|^2\chi_\ell(\mathbf{x})^2\Big) &= \int_{\mathbb{R}^6}\d{}\mathbf{x}\Big(\left|(\nabla_{\mathbb{R}^6}\widetilde{\omega})(\mathcal{M}^{-1}\mathbf{x})\right|^2\widetilde{\chi}_\ell(\mathcal{M}^{-1}\mathbf{x})^2\Big)\\
		&= \det\mathcal{M}\int_{\mathbb{R}^6}\d{}\mathbf{y}\Big(\left|(\nabla_{\mathbb{R}^6}\widetilde{\omega})(\mathbf{y})\right|^2\widetilde{\chi}_\ell(\mathbf{y})^2\Big)\\
		&\leq \dfrac{32}{3\sqrt{3}}\pi^2\mathfrak{a}^4.
	\end{align*}
	In the second equality we used the change of variables $\mathbf{y} = \mathcal{M}^{-1}\mathbf{x}$. In the last inequality we used $\nabla_{\mathbb{R}^6}\widetilde{\omega} = 0$ on $B(0,\sqrt{2}\mathfrak{a})$ and $\widetilde{\chi}_\ell(\mathbf{x}) \leq \mathds{1}_{\left\{|\mathbf{x}| \leq \ell\right\}}$ and $\nabla_\mathbf{y}(1/|\mathbf{y}|^4) = -4\mathbf{y}/|\mathbf{y}|^6$ and $\det\mathcal{M} = \sqrt{3}/2$ and that the surface of the $5$-dimensional sphere in $\mathbb{R}^6$ is given by $|\mathbb{S}^5| = 8\pi^2/3$. This proves \eqref{eq:truncated_modified_scattering_solution_energy}.
	
	Finally, notice that $f_\ell(x_1,x_2) = 1$ when $|\mathcal{M}^{-1}(x_1,x_2)|^{-1}\geq \ell$, which is true whenever $|x_1| \geq \widetilde{\ell}$ or $|x_2| \geq \widetilde{\ell}$. This immediately implies \eqref{eq:truncated_modified_scattering_solution_estimate_g_inequality} and concludes the proof of Lemma~\ref{lemma:scattering_properties}.
\end{proof}

	\section{Proof of the upper bound}
	To get an upper bound on \eqref{eq:ground_state_energy_definition}, we need to evaluate the energy on an appropriate trial state. To do so, we add correlations among particles to the uncorrelated state $\Psi_{N,L} \equiv 1$. Since correlations are produced mainly by three-body scattering events, we consider the trial state
\begin{equation}
	\label{eq:trial_state}
	\Psi_{N,L}(x_1,\dots,x_N) = \prod_{1\leq i<j<k\leq N}f_\ell(x_i-x_j,x_i-x_k),
\end{equation}
where $\ell$ is a parameter satisfying $\mathfrak{a} \ll \ell \ll L$ that will be fixed later; $\Psi_{N,L}$ is clearly an admissible state. The function $f_\ell$ defined in \eqref{eq:truncated_modified_scattering_solution} describes the three-body correlations up to a distance $\ell$. Such trial states have been first used in \cite{Bijl1940lowestWF,Dingle1949ZPE,Jastrwo1955ManyBody,} and are usually referred to as Jastrow factors (in \cite{Dyson1957gseHS} Dyson worked with a nonsymmetric trial state describing only nearest neighbour correlations). For readability's sake we from now on write
\begin{equation*}
	f_{ijk} = f_\ell(x_i-x_j,x_i-x_k)
\end{equation*}
and
\begin{equation}
	\nabla_if_{ijk} = \nabla_{x_i}f_\ell(x_i-x_j,x_i-x_k)
\end{equation}
for all $i,j,k\in\{1,\dots,N\}$.

To compute the energy of the trial state \eqref{eq:trial_state}, we first notice that
\begin{equation*}
	\nabla_{x_1}\Psi_{N,L}(x_1,\dots,x_N) = \sum_{2\leq p<q \leq N}\dfrac{\nabla_{1}f_{1pq}}{f_{1pq}}\prod_{1\leq i<j<k\leq N}f_{ijk},
\end{equation*}
which when combined with the three-body symmetry \eqref{eq:three_body_symmetry} implies
\begin{multline*}
	\dfrac{\left<\Psi_{N,L},\sum_{i=1}^N-\Delta_{x_i}\Psi_{N,L}\right>}{\|\Psi_{N,L}\|^2} = N\dfrac{\left<\nabla_{x_1}\Psi_{N,L},\nabla_{x_1}\Psi_{N,L}\right>}{\|\Psi_{N,L}\|^2}\\
	\begin{aligned}[t]
		&= \dfrac{N(N-1)(N-2)}{3}\dfrac{\int\d{}\mathbf{x}_N\left(\frac{|\mathcal{M}\nabla f_{123}|^2}{f_{123}^2}\prod_{1\leq i<j<k\leq N}f_{ijk}^2\right)}{\int\d{}\mathbf{x}_N\left(\prod_{1\leq i<j<k\leq N}f_{ijk}^2\right)}\\
		&\phantom{=} + N(N-1)(N-2)(N-3)\dfrac{\int\d{}\mathbf{x}_N\left(\frac{\nabla_1f_{123}}{f_{123}}\cdot\frac{\nabla_1f_{124}}{f_{124}}\prod_{1\leq i<j<k\leq N}f_{ijk}^2\right)}{\int\d{}\mathbf{x}_N\left(\prod_{1\leq i<j<k\leq N}f_{ijk}^2\right)}\\
		&\phantom{=} + \dfrac{N(N-1)(N-2)(N-3)(N-4)}{4}\dfrac{\int\d{}\mathbf{x}_N\left(\frac{\nabla_1f_{123}}{f_{123}}\cdot\frac{\nabla_1f_{145}}{f_{145}}\prod_{1\leq i<j<k\leq N}f_{ijk}^2\right)}{\int\d{}\mathbf{x}_N\left(\prod_{1\leq i<j<k\leq N}f_{ijk}^2\right)}\\
		&\eqqcolon \mathcal{I}_1 + \mathcal{I}_2 + \mathcal{I}_3.
	\end{aligned}
\end{multline*}
In the second equality we used
\begin{multline*}
	\left|\nabla_{x_1}f_\ell(x_1 - x_2,x_1 - x_3)\right|^2 + \left|\nabla_{x_2}f_\ell(x_1 - x_2,x_1 - x_3)\right|^2\\
	+
	\left|\nabla_{x_3}f_\ell(x_1 - x_2,x_1 - x_3)\right|^2 = 2\left|(\mathcal{M}\nabla_{\mathbb{R}^6} f_\ell)(x_1 - x_2,x_1 - x_3)\right|^2.
\end{multline*}
Let us now bound each term one by one. Thanks to \eqref{eq:truncated_modified_scattering_solution_estimate_g}, we have
\begin{equation*}
	\prod_{3\leq j<k\leq N}f_\ell(x_1-x_j,x_1-x_k)^2 \geq \prod_{j=3}^Ng_\ell(x_1 - x_j)
\end{equation*}
and
\begin{equation*}
	\prod_{3\leq j<k\leq N}f_\ell(x_2-x_j,x_2-x_k)^2 \geq \prod_{j=3}^Ng_\ell(x_2 - x_j).
\end{equation*}
Hence, by defining $u_\ell \coloneqq 1 - f_\ell^2$ and $v_\ell \coloneqq 1 - g_\ell$ we have the estimate
\begin{equation*}
	1 - \sum_{j=3}^Nv_{1j} - \sum_{j=3}^Nv_{2j} - \sum_{k = 3}^Nu_{12k} \leq \prod_{3\leq j<k\leq N}f_{1jk}^2f_{2jk}^2\prod_{k=3}^Nf_{12k}^2 \leq 1,
\end{equation*}
where we used the short-hand notations $v_{ij} = v_\ell(x_i - x_j)$ and $u_{ijk} = u_\ell(x_i-x_j,x_i-x_k)$. This allows us to decouple the variables $x_1$ and $x_2$ in the numerator and in the denominator of $\mathcal{I}_1$; with \eqref{eq:truncated_modified_scattering_solution_estimate_u} and \eqref{eq:truncated_modified_scattering_solution_energy} we obtain
\begin{align*}
	\mathcal{I}_1 &\leq \dfrac{N^3}{3}\dfrac{\int_{\mathbb{R}^6}\d{}\mathbf{x}\left(|(\mathcal{M}\nabla_{\mathbb{R}^6}f_\ell)(\mathbf{x})|^2\right)}{L^6 - CL^3N\int\d{}x(v_\ell(x)) - CN\int\d{}\mathbf{x}(u_\ell(\mathbf{x}))}\\
	&\leq \dfrac{32}{9\sqrt{3}}\pi^2\dfrac{N\rho^2\mathfrak{a}^4(1 + C(\mathfrak{a}/\ell)^4)}{1 - C\rho\ell^3 - C\rho \ell^2\mathfrak{a}^4/L^3}\\
	&\leq \dfrac{32}{9\sqrt{3}}\pi^2N\rho^2\mathfrak{a}^4\left(1 + C\left(\dfrac{\mathfrak{a}}{\ell}\right)^4 + C\rho\ell^3\right), \numberthis \label{eq:energy_bound_I1}
\end{align*}
under the assumption that $\rho\ell^3 \ll 1$ and $\mathfrak{a}\ll \ell \ll L$.  In the last inequality we used $\ell^2\mathfrak{a}^4/L^3 \leq \ell^3$. To bound $\mathcal{I}_2$ we similarly decouple the variables $x_2, x_3$ and $x_4$. Using again \eqref{eq:truncated_modified_scattering_solution_estimate_derivative} and \eqref{eq:truncated_modified_scattering_solution_estimate_u} we can bound
\begin{align*}
	\mathcal{I}_2 &\leq CN^4\dfrac{\int\d{}x\d{}y\d{}z(|\nabla f_\ell(x,y)|\cdot|\nabla f_\ell(x,z)|)}{L^9 - CNL^6\int\d{}x(v_\ell(x)) - CNL^3\int\d{}\mathbf{x}(u_\ell(\mathbf{x}))}\\
	&\leq CN\rho^2\mathfrak{a}^4\left[\rho\mathfrak{a}^4\ell^{-1}\right]\left(1 + C\rho\ell^3\right)\\
	&\leq CN\rho^2\mathfrak{a}^4\left(\rho\ell^3\right), \numberthis \label{eq:energy_bound_I2}
\end{align*}
when $\rho\ell^3 \ll 1$ and $\mathfrak{a}\ll \ell \ll L$.
Analogously, we bound $\mathcal{I}_3$ by decoupling the variables $x_1,x_2$ and $x_4$. Namely, using once more \eqref{eq:truncated_modified_scattering_solution_estimate_derivative} and \eqref{eq:truncated_modified_scattering_solution_estimate_u} we get
\begin{align*}
	\mathcal{I}_3 &\leq CN^5\dfrac{\left(\int\d{}\mathbf{x}|\nabla f_\ell(\mathbf{x})|\right)^2}{L^{12} - CNL^9\int\d{}x(v_\ell(x)) - CNL^6\int\d{}\mathbf{x}(u_\ell(\mathbf{x}))}\\
	&\leq CN\rho^2\mathfrak{a}^4\left[\rho^2\mathfrak{a}^4\ell^2\right]\left(1 + C\rho\ell^3\right)\\
	&\leq CN\rho^2\mathfrak{a}^4\left(\rho\ell^3\right) \numberthis \label{eq:energy_bound_I3}
\end{align*}
again under the condition that $\rho\ell^3 \ll 1$ and $\mathfrak{a}\ll \ell \ll L$.
From \eqref{eq:energy_bound_I1}--\eqref{eq:energy_bound_I3} we conclude that
\begin{equation*}
	E_{N,L} \leq \dfrac{32}{9\sqrt{3}}\pi^2N\rho^2\mathfrak{a}^4\left(1 + C\left(\dfrac{\mathfrak{a}}{\ell}\right)^4 + C\rho\ell^3\right).
\end{equation*}
Taking $\ell = \mathfrak{a}\left(\rho\mathfrak{a}^3\right)^{-1/7}$ finishes the proof of Theorem~\ref{th:energy_upper_bound}.

	\section*{Acknowledgments.}
	We thank Arnaud Triay for his precious feedback. L. J. was partially supported by the European Union. Views and opinions expressed are however those of the authors only and do not necessarily reflect those of the European Union or the European Research Council. Neither the European Union nor the granting authority can be held responsible for them. L. J. was partially supported by the Villum Centre of Excellence for the Mathematics of Quantum Theory (QMATH) with Grant No.10059. L. J. was supported by the grant 0135-00166B from Independent Research Fund Denmark. F. L. A. V. acknowledges partial support by the Deutsche Forschungsgemeinschaft (DFG, German Research Foundation) through the TRR 352 Project ID. 470903074 and by the European Research
	Council through the ERC CoG RAMBAS Project Nr. 101044249. 
	
	\printbibliography

\end{document}